\newcommand{\r@rrow}[3]{%
  \newcommand{#1}[2][]{%
    \def\next{#2\@ifempty{##1}{}{_{##1}}\@ifempty{##2}{}{^{##2}}}%
    \mathchoice{#3[##1]{##2}}{\next}{\next}{\next}%
  }%
}
\newcommand{\l@rrow}[3]{%
  \newcommand{#1}[2][]{%
    \def\next####1{%
      \setbox0=\hbox{$####1\vphantom{#2}\@ifempty{##1}{}{_{\vphantom{##1}}}%
      \@ifempty{##2}{}{^{##2}}$}%
      \setbox1=\hbox{$####1\vphantom{#2}\@ifempty{##1}{}{_{##1}}%
      \@ifempty{##2}{}{^{\vphantom{##2}}}$}%
      \setbox2=\vbox{\hbox to\wd0{}\hbox to\wd1{}}%
      \mathrel{\hskip\wd2\hskip-\wd0\box0\hskip-\wd1\box1{#2}}%
    }%
    \mathchoice{#3[##1]{##2}}{\next\textstyle}%
    {\next\scriptstyle}{\next\scriptscriptstyle}%
  }%
}
\l@rrow{\xl}{\leftarrow}{\xleftarrow}
\r@rrow{\xr}{\rightarrow}{\xrightarrow}
\l@rrow{\xphl}{\phleftarrow}{\xphleftarrow}
\r@rrow{\xphr}{\phrightarrow}{\xphrightarrow}
\title{A Characterization of Quasi-Decreasingness\footnote{%
  The research described in this paper is supported by FWF (Austrian
  Science Fund) project P27502.}}
\titlerunning{A Characterization of Quasi-Decreasingness} %optional, in case that the title is too long; the running title should fit into the top page column
\author[1]{Thomas Sternagel}
\author[1]{Christian Sternagel}
\affil[1]{University of Innsbruck, Austria\\
  \texttt{\{thomas,christian\}.sternagel@uibk.ac.at}}
\authorrunning{T. Sternagel and C. Sternagel} %mandatory. First: Use abbreviated first/middle names. Second (only in severe cases): Use first author plus 'et. al.'
\subjclass{%
  F.4.2 Grammars and Other Rewriting Systems %includes Rewriting Systems%
}% mandatory: Please choose ACM 1998 classifications from http://www.acm.org/about/class/ccs98-html . E.g., cite as "F.1.1 Models of Computation".
\keywords{%
  conditional term rewriting,
  termination,
  quasi-decreasingness,
  context-sensitivity}% mandatory: Please provide 1-5 keywords
\renewcommand*\Copyright[1]{%
  \def\@Copyright{%
      %\setbox\@tempboxa\hbox{\includegraphics[height=14\p@,clip]{}}%
      %\hspace*{\wd\@tempboxa}\enskip
      \ifx#1\@empty \else \textcopyright\ #1;\\\fi
      %\href{http://creativecommons.org/licenses/by/3.0/}%
      %     {\smash{\unhbox\@tempboxa}}\enskip
      %      licensed under Creative Commons License CC-BY\\
    }}
\def\copyrightline{%
  \ifx\@EventLogo\@empty
  \else
    \setbox\@tempboxa\hbox{\includegraphics[height=42\p@]{\@EventLogo}}%
    \rlap{\hspace\textwidth\hspace{-\wd\@tempboxa}\hspace{\z@}%
          \vtop to\z@{\vskip-0mm\unhbox\@tempboxa\vss}}%
  \fi
  \scriptsize
  \vtop{\hsize\textwidth
    \nobreakspace\\
    \@Copyright
    \ifx\@EventLongTitle\@empty\else\@EventLongTitle.\\\fi
    \ifx\@EventEditors\@empty\else
      \@Eds: \@EventEditors
      ; Article~No.\,\@ArticleNo; pp.\,\@ArticleNo:\thepage--\@ArticleNo:\pageref{LastPage}.%
    \fi
    %\setbox\@tempboxa\hbox{\includegraphics[height=14\p@,trim=0 15 0 0]{}}%
    %\hspace*{\wd\@tempboxa}\enskip
    %\href{http://www.dagstuhl.de/lipics/}%
    %     {Leibniz International Proceedings in Informatics}\\
    %\smash{\unhbox\@tempboxa}\enskip
    %\href{http://www.dagstuhl.de}%
    %     {Schloss Dagstuhl -- Leibniz-Zentrum f{\"u}r Informatik, Dagstuhl
    %     Publishing, Germany}%
  }
}
\theoremstyle{plain}
\newcommand\thmref[1]{Theorem~\ref{thm:#1}}
\newcommand\tabref[1]{Table~\ref{tab:#1}}
\newcommand\lemref[1]{Lemma~\ref{lem:#1}}
\newcommand\defby{\stackrel{\scriptscriptstyle\msf{def}}{=}}
\let\oldeqref\eqref
\renewcommand\eqref[1]{\oldeqref{eq:#1}}
\newcommand\rname{\rho}
\def\systemname#1{\mbox{\textsf{#1}}\xspace}
\newcommand{\muterm}{\systemname{MU-TERM}}
\newcommand{\vmtl}{\systemname{VMTL}}
\newcommand{\TTTT}{%
 \systemname{T\kern-0.2em\raisebox{-0.3em}T\kern-0.2emT\kern-0.2em%
 \raisebox{-0.3em}2}%
}
\newcommand\aprove{\systemname{APro\kern-0.1emVE}}
\newcommand\natt{\systemname{Na\!TT}}
\newcommand\mc[1]{\ensuremath{\mathcal{#1}}\xspace}
\newcommand\msf[1]{\ensuremath{\mathsf{#1}}\xspace}
\newcommand\FF{\mc{F}}
\newcommand\VV{\mc{V}}
\newcommand\RR{\mc{R}}
\newcommand\TT{\mc{T}}
\newcommand\EE{\mc{E}}
\newcommand\PP{\mc{P}}
\newcommand{\Evars}{\ensuremath{\EE\VV}}
\newcommand\fs[1]{\msf{#1}}
\newcommand\sig{\FF}
\newcommand\vars{\VV}
\newcommand\ys{\mathit{ys}}
\newcommand{\Pos}{\ensuremath{\PP\textsf{os}}\xspace}
\newcommand\varsseq{\mathsf{v}}
\newcommand\Evarsseq{\mathsf{ev}}
\newcommand\termsover[2]{\ensuremath{\TT(#1,#2)}}
\newcommand\terms{\termsover{\sig}{\vars}}
\newcommand{\ucs}{{U_\msf{CS}(\RR)}}
\newcommand\IF{\Leftarrow}
\newcommand\EQ{\approx}
\newcommand\crule[1]{\ell_{#1} \to r_{#1} \IF c_{#1}}
\newcommand{\subterm}{\ensuremath{\mathrel{\vartriangleright}}}
\newcommand{\auxrel}{\mathrel{({\succ}\cup{\subterm})^+}}
\newcommand\auxreln[1]{\mathrel{({\succ}\cup{\subterm})^{#1}}}
\begin{document}

\maketitle

\section{Introduction}

In 2010 Schernhammer and Gramlich~\cite{SG10} showed that quasi-decreasingness
of a DCTRS~$\RR$ is equivalent to $\mu$-termination of its context-sensitive
unraveling $\ucs$ on original terms. While the direction that
quasi-decreasingness of $\RR$ implies $\mu$-termination of $\ucs$ on original
terms is shown directly; the converse -- facilitating the use of
context-sensitive termination tools like \muterm~\cite{AGLN10} and
\vmtl~\cite{SG09} -- employs the additional notion of context-sensitive
quasi-reductivity of $\RR$.
In the following, we give a direct proof of the fact that $\mu$-termination of
$\ucs$ on original terms implies quasi-decreasingness of $\RR$.
Moreover, we report our experimental findings on DCTRSs from the confluence
problems database (Cops),\footnote{\url{http://cops.uibk.ac.at}} extending the
experiments of Schernhammer and Gramlich.

\subparagraph{Contribution.}
A direct proof that $\mu$-termination of a CSRS $\ucs$ on original terms implies
quasi-decreasingness of the DCTRS $\RR$. New experiments on a recent DCTRS
collection.

\section{Preliminaries}

We assume familiarity with the basic notions of (conditional and
context-sensitive) term
rewriting~\cite{BN98,L98,O02}, but shortly recapitulate terminology and notation
that we use in the remainder.
%relations
Given two arbitrary binary relations $\xr[\alpha]{}$ and $\xr[\beta]{}$,
we write $\xl[\alpha]{}$, $\xr[\alpha]{+}$, $\xr[\alpha]{*}$ for the
\emph{inverse},
the \emph{transitive closure}, and the \emph{reflexive transitive closure}
of $\xr[\alpha]{}$,
respectively.
%% quasi-commutation
%We say that $\xr[\alpha]{}$ \emph{quasi-commutes} over $\xr[\beta]{}$ whenever
%$
%{\xr[\beta]{} \cdot \xr[\alpha]{}} \subseteq
%  {\xr[\alpha]{} \cdot \mathrel{({\xr[\alpha]{}} \cup {\xr[\beta]{}})^*}}
%$
%holds.
% relative rewriting
The relation obtained by considering $\xr[\alpha]{}$ \emph{relative to}
$\xr[\beta]{}$, written $\xr[\alpha/\beta]{}$, is defined by
$\xr[\beta]{*}\cdot\xr[\alpha]{}\cdot\xr[\beta]{*}$.
% variables
We use $\vars(\cdot)$ to denote the set of variables occurring in a given
syntactic object, like a term, a pair of terms, a list of terms, etc.
% terms
The set of terms $\TT(\FF,\VV)$ over a given signature of function symbols $\FF$
and set of variables $\VV$ is defined inductively:
$x \in \TT(\FF, \VV)$ for all variables $x \in \VV$,
and for every $n$-ary function symbol $f \in \FF$ and terms $t_1,\ldots,t_n \in
\TT(\FF,\VV)$ also $f(t_1,\ldots,t_n) \in \TT(\FF,\VV)$.
% DCTRS
A \emph{deterministic oriented 3-CTRS (DCTRS)} $\RR$ is a set of conditional
rewrite rules of the shape $\crule{}$ where $\ell$ and $r$ are terms and $c$ is
a possibly empty sequence of pairs of terms $s_1 \approx t_1, \ldots, s_n
\approx t_n$.
For all rules in $\RR$ we have that
$\ell \not\in \VV$,
$\vars(r) \subseteq \vars(\ell,c)$, and
$\vars(s_i) \subseteq \vars(\ell,t_1,\ldots,t_{i-1})$
for all $1 \leqslant i \leqslant n$.
% rewrite relation
The rewrite relation induced by a DCTRS~$\RR$ is structured into levels. For
each level $i$, a TRS~$\RR_ i$ is defined recursively by
$\RR_0 = \varnothing$ and
$\RR_{i+1} = \{
  \ell \sigma \approx r \sigma \mid
  \crule{} \in \RR
  \land
  \forall s \approx t \in c.~ s \sigma \xr[\RR_i]{*} t \sigma
\}$
where for a given TRS~$\mathcal{S}$, $\xr[\mathcal{S}]{}$ denotes the induced
rewrite relation (i.e., its closure under contexts and substitutions).
Then the rewrite relation of $\RR$ is $\xr[\RR]{} = \bigcup_{i \geqslant
0}\xr[\RR_i]{}$.
% conditional and unconditional rules
We have $\RR = \RR_\msf{c} \uplus \RR_\msf{u}$ where $\RR_\msf{c}$ denotes the
subset of rules with non-empty conditional part ($n > 0$) and $\RR_\msf{u}$ the
subset of unconditional rules ($n = 0$).
% quasi-decreasingness
A DCTRS $\RR$ over signature $\FF$ is \emph{quasi-decreasing} if there is a 
well-founded order $\succ$ on $\terms$ such that
${\succ} = ({\succ}\cup{\subterm})^+$,
${\xr[\RR]{}} \subseteq {\succ}$,
and for all rules 
$\ell\to r\Leftarrow s_1\approx t_1,\ldots,s_n\approx t_n$ in $\RR$,
all substitutions $\sigma \colon \VV \to \terms$, and
$0\leqslant i < n$, if
$s_j\sigma \xr[\RR]{*} t_j \sigma$
for all $1 \leqslant j \leqslant i$
then $\ell \sigma \succ s_\text{i+1} \sigma$
.

% unraveling
Given a DCTRS $\RR$ its \emph{unraveling $U(\RR)$} (cf.~\cite[p. 212]{O02}) is
defined as follows.
For each conditional rule 
$\rname\colon \ell\to r\Leftarrow s_1 \approx t_1, \ldots, s_n \approx t_n$
(where $n > 0$)
we introduce $n$ fresh function symbols $U^\rname_1,\ldots,U^\rname_n$
and generate the set of $n + 1$ unconditional rules $U(\rname)$ as follows
\begin{align*}
\ell &\to U^\rname_1(s_1,\varsseq(\ell)) \\
U^\rname_1(t_1,\varsseq(\ell)) &\to
U^\rname_2(s_2,\varsseq(\ell),\Evarsseq(t_1))\\
&\vdots\\
U^\rname_n(t_n,\varsseq(\ell),\Evarsseq(t_1,\ldots,t_{n-1})) &\to r
\end{align*}
where $\varsseq$ and $\Evarsseq$ denote functions that yield the respective
sequences of elements of $\vars$ and $\Evars$ in some arbitrary but fixed order,
and
$\Evars(t_i) = \vars(t_i) \setminus \vars(\ell,t_1,\ldots,t_{i-1})$
denotes the \emph{extra variables} of the right-hand side of the $i$th
condition.
Finally the unraveling of the DCTRS is $U(\RR) = {\RR_\msf{u}}\cup{\bigcup_{\rname \in
\RR_\msf{c}}U(\rname)}$
.

% context-sensitive rewriting
A \emph{context-sensitive rewrite system} (CSRS) is a TRS (over signature $\FF$)
together with a replacement map $\mu \colon \FF \to 2^\mathbb{N}$ that restricts
the argument positions of each function symbol in $\FF$ at which we are allowed
to rewrite.
% active positions
A position $p$ is \emph{active} in a term $t$ if either $p = \epsilon$,
or $p = iq$, $t = f(t_1,\ldots,t_n)$, $i \in \mu(f)$, and $q$ is active in
$t_i$.
The set of active positions in a term $t$ is denoted by $\Pos_\mu(t)$.
% context-sensitive rewrite relation
Given a CSRS $\RR$ a term $s$ $\mu$-rewrites to a term $t$, written $s \to_\mu
t$, if $s \to_\RR t$ at some position $p$ and $p \in \Pos_\mu(s)$.
% mu-termination
A CSRS is called \emph{$\mu$-terminating} if its context-sensitive rewrite
relation is terminating.
% mu-subterm
The (proper) subterm relation with respect to replacement map $\mu$, written
$\subterm_\mu$, restricts the ordinary subterm relation to active positions.

We conclude this section by recalling the notion of context-sensitive
quasi-reductivity in an attempt to further appreciation for a proof without this
notion.
\begin{definition}
A CSRS $\RR$ over signature $\FF$ is \emph{context-sensitively quasi-reductive}
if there is an extended signature $\FF' \supseteq \FF$, a replacement map
$\mu$ (with $\mu(f) = \{1, \ldots, n\}$ for every $n$-ary $f \in \FF$), and a
$\mu$-monotonic, well-founded partial order $\succ_\mu$ on $\TT(\FF',\VV)$ such
that for every rule~$\ell \to r \IF s_1 \EQ t_1,\ldots,s_k \EQ t_k$, every
substitution~$\sigma : \VV \to \TT(\FF,\VV)$, and every $0 \leqslant i \leqslant
k-1$:
\begin{itemize}
\item
$\ell\sigma \mathrel{({\succ_\mu}\cup{\subterm_\mu})}^+ s_{i+1}\sigma$ whenever
$s_j\sigma \succeq_\mu t_j\sigma$ for every $1\leqslant j\leqslant i$, and

\item
$\ell\sigma \succ_\mu r\sigma$ whenever $s_j\sigma \succeq_\mu t_j\sigma$
for every $1\leqslant j\leqslant k$.
\end{itemize}
\end{definition}

\section{Characterization}

In order to present our main result (the proof of \thmref{main} below) we first
restate some definitions and theorems which we will use in the proof.

The usual unraveling is extended by a replacement map in order to restrict
reductions in $U$-symbols to the first argument position~\cite[Definition
4]{SG10}.

\begin{definition}[Unraveling $\ucs$]
The \emph{context-sensitive unraveling $\ucs$} is the unraveling $U(\RR)$
together with the replacement map $\mu$ such that 
$\mu(f) = \{1,\ldots,k\}$ if $f\in\FF$ with arity $k$ and 
$\mu(f) = \{1\}$ otherwise.
We say that the resulting CSRS is \emph{$\mu$-terminating on original
terms}~\cite[Definition~7]{SG10}, if there is no infinite $\ucs$-reduction
starting from a term $t \in \terms$.
\end{definition}

Simulation completeness of $\ucs$ (i.e., that every $\RR$-step can be simulated
by a $\ucs$-reduction) can be shown by induction on the level of a
conditional rewrite step~\cite[Theorem~1]{SG10}.
\begin{theorem}[Simulation completeness]
\label{thm:simcomp}
  For a DCTRS $\RR$ we have ${\xr[\RR]{}}\subseteq{\xr[\ucs]{+}}$.
  \qed
\end{theorem}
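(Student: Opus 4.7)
My plan is to proceed by induction on the smallest level $i$ for which a step $u \xr[\RR]{} v$ can be written as $u \xr[\RR_i]{} v$. In the unconditional sub-case the step applies a rule $\ell \to r \in \RR_\msf{u}$; such a rule already belongs to $\ucs$, so $u \xr[\ucs]{} v$ holds directly, using that $u \in \terms$ is built entirely over $\FF$ and every $\FF$-symbol has all positions in $\mu$, so that the rewrite position is active.

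For the conditional sub-case at level $i$ I use a rule $\rname\colon \ell \to r \IF s_1 \EQ t_1,\ldots,s_n \EQ t_n$, a substitution $\sigma$, and a position $p$ with $u|_p = \ell\sigma$, $v = u[r\sigma]_p$, and $s_j\sigma \xr[\RR_{i-1}]{*} t_j\sigma$ for every $j$. Applying the induction hypothesis to every one-step reduction occurring in these derivations yields $s_j\sigma \xr[\ucs]{*} t_j\sigma$, which I then chain through the rules of $U(\rname)$:
\[
\ell\sigma \xr[\ucs]{} U^\rname_1(s_1\sigma, \varsseq(\ell)\sigma) \xr[\ucs]{*} U^\rname_1(t_1\sigma, \varsseq(\ell)\sigma) \xr[\ucs]{} \cdots \xr[\ucs]{} U^\rname_n(t_n\sigma, \ldots) \xr[\ucs]{} r\sigma.
\]
Each $\xr[\ucs]{*}$ segment in the middle rewrites the first argument of a $U$-symbol, which is its unique active position, so those inner reductions are admissible; matching the next $U$-rule against the current term after each segment is well-defined because the DCTRS variable constraints (in particular $\vars(s_{k+1}) \subseteq \vars(\ell, t_1,\ldots,t_k)$) guarantee that the extra variables of $t_k$ fit exactly into the slots added by the subsequent $U$-symbol.

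Finally I embed the whole chain inside the context $u[\cdot]_p$; this is legal because $u$ is built over $\FF$ so $p$ is active in $u$, while every intermediate $U$-term sits strictly below $p$. The result is $u \xr[\ucs]{+} v$, completing the induction and hence the proof. The main bookkeeping obstacle I anticipate is verifying the $U$-rule matches while carrying the accumulating extra variables; the level induction itself is routine once the simulation of a single conditional step has been laid out.
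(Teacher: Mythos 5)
Your proof is correct and follows exactly the route the paper indicates: the paper itself only cites Schernhammer and Gramlich and notes that simulation completeness ``can be shown by induction on the level of a conditional rewrite step,'' which is precisely your level induction with the unconditional case handled directly and the conditional case simulated through the $U(\rname)$-rules at active positions. The details you flag (activeness of positions in original terms, the first argument of $U$-symbols being the unique active one, and the variable bookkeeping via $\vars(s_{k+1}) \subseteq \vars(\ell, t_1,\ldots,t_k)$) are exactly the points that make the standard argument go through.
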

Furthermore, we need the following auxiliary result.
\begin{lemma}
  \label{lem:one}
  For any context-sensitive rewrite relation $\xr[\mu]{}$ induced by the
  replacement map $\mu$, $\subterm_\mu$ commutes over $\xr[\mu]{}$, i.e.,
  ${\subterm_\mu}\cdot{\xr[\mu]{}} \subseteq {\xr[\mu]{}}\cdot{\subterm_\mu}$.
  %For a context-sensitive rewrite relation $\xr[\mu]{}$ induced by the replacement
  %map $\mu$ and
  %$\subterm_\mu$, we have
  %${\subterm_\mu}\cdot{\xr[\mu]{}} \subseteq {\xr[\mu]{}}\cdot{\subterm_\mu}$.
\end{lemma}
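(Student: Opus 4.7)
The plan is a direct lifting argument: from $(a,c) \in {\subterm_\mu}\cdot{\xr[\mu]{}}$ I would exhibit some intermediate~$d$ with $a \xr[\mu]{} d$ and $d \subterm_\mu c$ by performing the original step ``in place'' inside~$a$.

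First I would unfold the hypothesis to obtain a witness~$b$ satisfying $a \subterm_\mu b$ and $b \xr[\mu]{} c$. By definition this yields an active position $p \neq \epsilon$ of~$a$ with $a|_p = b$, as well as an active position~$q$ of~$b$ together with a rule $\ell \to r$ and a substitution~$\sigma$ such that $b|_q = \ell\sigma$ and $c = b[r\sigma]_q$.

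Next I would set $d \defby a[c]_p$ and verify the two required properties. For $a \xr[\mu]{} d$ it suffices to observe that $p\cdot q \in \Pos_\mu(a)$, which follows from the recursive definition of active positions using $p \in \Pos_\mu(a)$ and $q \in \Pos_\mu(a|_p) = \Pos_\mu(b)$; the step at position $p\cdot q$ then contracts $\ell\sigma$ to $r\sigma$ and produces exactly $a[c]_p = d$. For $d \subterm_\mu c$ I would note that $a$ and $d$ coincide strictly above~$p$ (since the rewrite happens at or below~$p$), so the function symbols on the path from the root to~$p$ are identical in both terms and hence $p \in \Pos_\mu(d)$; since $d|_p = c$ by construction and $p \neq \epsilon$, this gives $d \subterm_\mu c$.

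I do not expect any real obstacle: the lemma captures that a context-sensitive step inside an active subterm lifts to the surrounding term at the concatenated---and still active---position, so the proof is case-free and essentially mechanical.
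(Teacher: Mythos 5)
Your proof is correct and follows the same idea as the paper's: perform the rewrite step in place inside the larger term and observe that both the concatenated redex position and the original subterm position remain active. The paper phrases this with a nonempty context $C$ (concluding $C[t] \xr[\mu]{} C[u] \subterm_\mu u$) where you use explicit positions, but the argument is identical, and your version just spells out the activeness bookkeeping in more detail.
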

\begin{proof}
  Assume $s \subterm_\mu t \xr[\mu]{} u$ for some terms~$s$, $t$, and $u$. Then
  $s = {C[t]}\subterm_\mu{t} \xr[\mu]{} u$ for some nonempty context~$C$.  Thus
  we conclude by $C[t] \xr[\mu]{} C[u] \subterm_\mu u$.
  %${C[D[\ell\sigma]]}\xr[\mu]{}{C[D[r\sigma]]}\subterm_\mu{D[r\sigma]}$.
\end{proof}

%\begin{lemma}[Bachmair and Dershowitz~{\cite[Proof of Lemma~2]{BD86}}]
%  \label{lem:two}
%  If $\xr[\alpha]{}$ quasi-commutes over $\xr[\beta]{}$ then every
%  $\xr[\alpha]{}$-terminating element is $\xr[\alpha/\beta]{}$-terminating.
%  \qed
%\end{lemma}
%
With this we are finally able to prove our main result.
\begin{theorem}
  \label{thm:main}
  If the CSRS $\ucs$ is $\mu$-terminating on original terms then the DCTRS $\RR$
  is quasi-decreasing.
\end{theorem}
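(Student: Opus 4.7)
The plan is to choose as witnessing well-founded order the relation $\succ$ on $\terms$ defined by $s \succ t$ iff there is a sequence $s = u_0, u_1, \ldots, u_k = t$ with $k \geqslant 1$ and both endpoints in $\terms$, where each transition $u_i \to u_{i+1}$ is either a $\ucs$-step or a $\subterm_\mu$-step (intermediate terms may involve $U$-symbols). Concatenation of witnessing sequences gives transitivity, and since on original terms every position is active so that $\subterm$ and $\subterm_\mu$ agree, any $\subterm$-step at an endpoint can be prepended or appended to a witnessing sequence, yielding $\succ = (\succ \cup \subterm)^+$. The inclusion $\xr[\RR]{} \subseteq \succ$ is immediate from \thmref{simcomp}.

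The main obstacle is well-foundedness of $\succ$. Suppose toward contradiction that some $s_0 \in \terms$ starts an infinite $\succ$-chain; concatenating witnesses we obtain an infinite $(\xr[\ucs]{} \cup \subterm_\mu)$-chain from $s_0$. Since $\subterm_\mu$ is well-founded, this chain contains infinitely many $\ucs$-steps. Applying \lemref{one} to commute each $\subterm_\mu$-step past the following $\ucs$-step, we inductively produce, for every $n$, a prefix $s_0 \xr[\ucs]{} v_1 \xr[\ucs]{} \cdots \xr[\ucs]{} v_n$ of a rearranged chain; together these prefixes form an infinite $\ucs$-reduction from the original term $s_0$, contradicting $\mu$-termination of $\ucs$ on original terms.

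It remains to verify the quasi-decreasing inequality. Given a rule $\rname\colon \ell \to r \IF s_1 \approx t_1, \ldots, s_n \approx t_n$, a substitution $\sigma \colon \VV \to \terms$, and an index $0 \leqslant i < n$ with $s_j\sigma \xr[\RR]{*} t_j\sigma$ for all $1 \leqslant j \leqslant i$, we chain unraveling steps with simulated conditional evaluations:
\[
  \ell\sigma \xr[\ucs]{} U^\rname_1(s_1\sigma, \varsseq(\ell)\sigma) \xr[\ucs]{*} U^\rname_1(t_1\sigma, \varsseq(\ell)\sigma) \xr[\ucs]{} \cdots \xr[\ucs]{} U^\rname_{i+1}(s_{i+1}\sigma, \ldots).
\]
Each $\xr[\ucs]{*}$-segment exists by \thmref{simcomp}, and the enclosing context respects $\mu$ because the first argument position of every $U^\rname_j$ is active, so a $\ucs$-reduction of $s_j\sigma$ survives when placed under $U^\rname_j(\cdot, \ldots)$. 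A final $\subterm_\mu$-step from $U^\rname_{i+1}(s_{i+1}\sigma, \ldots)$ to the original term $s_{i+1}\sigma$ in its active first argument completes a witnessing sequence for $\ell\sigma \succ s_{i+1}\sigma$, as required.
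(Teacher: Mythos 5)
Your proposal is correct and follows essentially the same route as the paper: the same order $({\xr[\ucs]{}}\cup{\subterm_\mu})^+$ restricted to original terms, well-foundedness via commuting $\subterm_\mu$ past $\xr[\ucs]{}$ (\lemref{one}) to extract an infinite $\ucs$-reduction from an original term, simulation completeness for ${\xr[\RR]{}}\subseteq{\succ}$, and the explicit $U^\rname_{i+1}$-chain followed by an active-subterm step for the quasi-decreasing inequality. The only differences are presentational (the paper packages the commutation argument via the relative relation $\xr[\ucs/{\subterm_\mu}]{}$, and you spell out the unraveling chain in more detail).
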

\begin{proof}
Assume that $\ucs$ is $\mu$-terminating on original terms.
We define an order $\succ$ on $\terms$
\[
\textstyle
  {\succ}
  \defby
  {({\xr[\ucs]{}}\cup{\subterm_\mu})^+ \cap (\terms \times \terms)}
  \tag{$\star$}\label{eq:1}
\]
and show that it satisfies the four properties from the definition of
quasi-decreasingness:
\begin{enumerate}
  \item We start by showing that $\succ$ is well-founded on $\terms$.
    Assume, to the contrary, that $\succ$ is not well-founded. Then we have
    an infinite sequence
    \begin{equation}
    {t_1}\succ{t_2}\succ{t_3}\succ{\ldots}\tag{$\dagger$}\label{eq:inf}
    \end{equation}
    where all
    $t_i\in\terms$.
    By definition $\subterm_\mu$ is well-founded.
    Moreover, since $\ucs$ is $\mu$-terminating on original terms, $\xr[\ucs]{}$ is
    well-founded on $\terms$.
    Further note that every $\xr[\ucs]{}$-terminating element
    (hence every term in $\terms$) is
    ${\xr[\ucs/{\subterm_\mu}]{}}$-terminating, since
    by a repeated application of \lemref{one} every infinite reduction
    $
      t_1 \xr[\ucs/{\subterm_\mu}]{} t_2 \xr[\ucs/{\subterm_\mu}]{} \cdots
    $
    starting from a term $t_1 \in \terms$
    can be transformed into an infinite $\xr[\ucs]{}$-reduction, contradicting
    well-foundedness of $\xr[\ucs]{}$ on $\terms$.
    %From \lemref{one} it follows that $\xr[\ucs]{}$ quasi-commutes over
    %$\subterm_\mu$,
    %i.e.,
    %${\subterm_\mu}\cdot{\xr[\ucs]{}}\subseteq{\xr[\ucs]{}}\cdot{({\xr[\ucs]{}}
    %\cup {\subterm_\mu})^*}$.
    %But then, by \lemref{two}, we know that every
    %$\xr[\ucs]{}$-terminating element (hence every term in $\terms$) is
    %${\xr[\ucs/{\subterm_\mu}]{}}$-terminating.
    We conclude by analyzing the following two cases:
    \begin{itemize}
      \item
      Either \eqref{inf} contains $\xr[\ucs]{}$ only finitely often,
      contradicting well-foundedness of $\subterm_\mu$,
      \item
      or there are infinitely many $\xr[\ucs]{}$-steps in \eqref{inf}.
      But then we can construct a sequence
      ${s_1}{\xr[\ucs/{\subterm_\mu}]{}}
      {s_2}{\xr[\ucs/{\subterm_\mu}]{}}
      {s_3}{\xr[\ucs/{\subterm_\mu}]{}}
      \ldots$ with $s_1 = t_1$,
      contradicting the fact that all elements of $\terms$ are
      ${\xr[\ucs/{\subterm_\mu}]{}}$-terminating.
    \end{itemize}
  \item Next we show ${\succ}={\auxrel}$. The direction
    ${\succ}\subseteq{\auxrel}$ is obvious.
    For the other direction, ${\auxrel}\subseteq{\succ}$,
    assume we have ${s}\auxreln{n+1}{t}$.
    Then we proceed by induction on $n$.
    In the base case $s \mathrel{({\succ}\cup{\subterm})} t$.
    If $s \succ t$ we are done. Otherwise, $s \subterm t$ and thus also $s
    \subterm_\mu t$ since $s, t\in\terms$ and therefore $s \succ t$.
    In the step case $n = k + 1$ for some $k$, and $s \mathrel{({\succ}\cup{\subterm})} u
    \auxreln{k}{t}$. Then we obtain $s \succ u$ by a similar case-analysis as in
    the base case. Moreover $u \succ t$ by induction hypothesis, and thus $s
    \succ t$.
    %There are two cases:
    %\begin{itemize}
    %  \item ${s}\subterm{u}\auxreln{n}{t}$. Since $s,u\in\terms$ also
    %    ${s}\subterm_\mu{u}$ and hence ${s}\succ{u}$ by \eqref{1}.
    %    Moreover, ${u}\succ{t}$ by induction hypothesis, and thus
    %    ${s}\succ{t}$.
    %  \item ${s}\succ{u}\auxreln{n}{t}$ and by induction hypothesis
    %    (${u}\succ{t}$) also ${s}\succ{t}$.
    %\end{itemize}
  \item Now we show that ${\xr[\RR]{}}\subseteq{\succ}$. Assume ${s}\xr[\RR]{}{t}$.
    Together with simulation completeness of $\ucs$, \thmref{simcomp}, we get ${s}\xr[\ucs]{+}{t}$
    which in turn implies ${s}\succ{t}$.
  \item Finally, we show that if for all $\ell \to r \IF s_1 \approx t_1, \ldots,
    s_n \approx t_n$ in $\RR$, substitutions $\sigma\colon \VV \to \terms$, and
    $0 \leqslant i < n$,
    if ${s_j\sigma}\xr[\RR]{*}{t_j\sigma}$ for all $1 \leqslant j \leqslant i$ then
    ${\ell\sigma}\succ{s_{i+1}\sigma}$.
    We have the sequence
    \[\textstyle
      {\ell\sigma}\xr[\ucs]{+}{U^\rname_{i+1}(s_{i+1},\varsseq(\ell),\Evarsseq(t_1,\ldots,t_i))\sigma}
      \subterm_\mu{s_{i+1}\sigma}
    \]
    using the definition of $\ucs$ together with simulation completeness (\thmref{simcomp}).
    But then also ${\ell\sigma}\succ{s_{i+1}\sigma}$ as wanted because
    $\ell\sigma,s_{i+1}\sigma\in\terms$.
\end{enumerate}
Hence $\RR$ is quasi-decreasing with the order $\succ$.
%We assume that $\ucs$ is $\mu$-terminating on $\terms$. But that means that
%we can define a reduction order ${\succ_\mu}={\xr[\ucs]{+}}$ on $\terms$.
%Furthermore, we define the order ${\succ}={({\succ_\mu} \cup
%{\vartriangleright_\mu})^+}$ on $\terms$.
%Hence we have ${\succ}={\succ_\subt}$ on $\terms$. 
%We also have ${\xr[\ucs]{}}\subseteq{\succ}$ on $\terms$.
%From transitivity of $\succ$ we get ${\xr[\ucs]{+}}\subseteq{\succ}$ on $\terms$.
%Finally from simulation completeness of $\ucs$
%(${\xr[\RR]}\subseteq{\xr[\ucs]{+}}$) we also have ${\xr[\RR]{}}\subseteq{\succ}$ on
%$\terms$.
%We assume that $s_j\sigma \xr[\RR]{*} t_j\sigma$ for every $1 \leqslant j
%\leqslant i$ for a rule $\rname\colon \ell\to r \IF s_1 \approx t_1,\ldots s_n
%\approx t_n$, a substitution $\sigma\colon \VV \to \terms$, and $0 \leqslant i <
%n$.
%Then we have the following sequence in $\ucs$:
%\[
%  {\ell\sigma}
%  \xr[\ucs]{+}
%  {U^\rname_{i+1}(s_{i+1},\varsseq(\ell),\Evarsseq(t_1,\ldots,t_i))\sigma}
%\]
%and hence from ${\xr[\ucs]}\subseteq{\succ}$ also 
%\[
%  {\ell\sigma}
%  \succ
%  {U^\rname_{i+1}(s_{i+1},\varsseq(\ell),\Evarsseq(t_1,\ldots,t_i))\sigma}
%\]
%We also have 
%\[
%  {U^\rname_{i+1}(s_{i+1},\varsseq(\ell),\Evarsseq(t_1,\ldots,t_i))\sigma}
%  \vartriangleright_\mu
%  {s_{i+1}\sigma}
%\]
%Finally we have shown ${\ell\sigma}\succ{s_{i+1}\sigma}$
%and hence $\RR$ is quasi-decreasing with the order $\succ$ on $\terms$.
\end{proof}

The converse of \thmref{main} has already been shown by Schernhammer and
Gramlich~\cite[Theorem 4]{SG10}:
\begin{theorem}
  If a DCTRS $\RR$ is quasi-decreasing then the CSRS $\ucs$ is $\mu$-terminating
  on original terms.
  \qed
\end{theorem}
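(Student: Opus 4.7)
The plan is to prove, by well-founded induction on the order $\succ$ witnessing quasi-decreasingness of $\RR$, that every $t \in \terms$ is $\xr[\ucs]{}$-terminating. Fix $t$ and assume the IH that every $t' \in \terms$ with $t \succ t'$ is $\ucs$-terminating. I would argue by contradiction: suppose $t$ admits an infinite $\ucs$-derivation, and derive either a violation of the IH or an infinite $\succ$-chain in $\terms$.

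The central device is a projection $\pi$ from extended-signature terms reachable from $t$ by $\ucs$-reductions back into $\terms$: $\pi$ acts homomorphically on $\FF$-symbols and sends each $U^\rho_i(s,\vec{v},\vec{w})$ (for a conditional rule $\rho\colon \ell \to r \IF s_1 \EQ t_1,\ldots,s_n \EQ t_n$) to $\ell\sigma$, where $\sigma$ is reconstructed from the inactive frozen arguments $\vec{v} = \varsseq(\ell)\sigma$ and $\vec{w} = \Evarsseq(t_1,\ldots,t_{i-1})\sigma$ while the active first argument $s$ is discarded. Because inactive positions cannot be rewritten under $\mu$, $\pi$ is well-defined and stable along the derivation. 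A single $\ucs$-step now splits into four cases: (i) a rewrite outside every $U$-symbol using an $\RR_\msf{u}$-rule simulates $\xr[\RR]{}$ and strictly decreases $\pi$ in $\succ$; (ii) such a rewrite initiating a conditional rule $\ell\sigma \to U^\rho_1(s_1\sigma,\varsseq(\ell)\sigma)$ leaves $\pi$ fixed but spawns a $U$-subterm whose first argument $s_1\sigma$ satisfies $\ell\sigma \succ s_1\sigma$ by quasi-decreasingness with empty precondition; (iii) a $U$-advance $U^\rho_i \to U^\rho_{i+1}$ fires only after the first argument has become $t_i\sigma_i$ (extending $\sigma$ on any extra variables of $t_i$), leaves $\pi$ fixed, and installs the new first argument $s_{i+1}\sigma_i$ with $\ell\sigma \succ s_{i+1}\sigma_i$; (iv) a reduction strictly inside the first argument of an innermost $U$-symbol.

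The main obstacle is the precondition of case~(iii): invoking quasi-decreasingness at the $i$-th advance requires $s_j\sigma_j \xr[\RR]{*} t_j\sigma_j$ for each $j \leqslant i$, whereas the preceding sub-derivation of the $j$-th first argument is in $\ucs$. I would establish the required soundness by a side induction on derivation length, showing that a complete sub-derivation of a $U$-slot from $s_j\sigma_j$ to $t_j\sigma_j \in \terms$ projects via $\pi$ to a genuine $\xr[\RR]{*}$-reduction. With this in place, every first argument introduced along the infinite derivation is $\succ$-bounded above by its ambient $\ell\sigma$, which in turn satisfies $t \succeq \ell\sigma$ (since the initiating step happens at a purely $\FF$-position, so $\ell\sigma$ sits as a subterm of some $\pi$-value $\preceq t$ and $\subterm \subseteq \succ$), hence the first argument is $\prec t$. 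The IH then yields $\ucs$-termination of every such first argument, bounding cases~(ii)--(iv) between consecutive case~(i) steps to finitely many occurrences; the original infinite reduction therefore contains infinitely many case~(i) steps, producing an infinite $\succ$-chain in $\terms$ and contradicting well-foundedness.
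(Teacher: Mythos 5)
The paper does not actually prove this direction itself---the theorem is imported from Schernhammer and Gramlich (their Theorem~4) and stated here with a bare \qed---so there is no in-paper argument to compare against. Your sketch is in the spirit of their direct proof (interpret $U$-terms by the instantiated left-hand side of the originating conditional rule and push the well-founded order $\succ$ through the $\ucs$-reduction), but as written it has a genuine gap in the inductive structure. Your outer induction hypothesis is confined to original terms: every $t'\in\terms$ with $t\succ t'$ is $\ucs$-terminating. The first arguments whose termination you need, however, are generally \emph{not} original. When $\ell\to U^\rho_1(s_1,\varsseq(\ell))$ fires inside a mixed term reachable from $t$, the matching substitution $\sigma$ may bind variables of $\ell$ to terms containing $U$-symbols (which then also populate the frozen argument list), so the installed first argument $s_1\sigma$ lies outside $\terms$. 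Your bound only yields $t\succ s_1\pi(\sigma)$ for the \emph{projected} instance, and passing from ``$\pi(s_1\sigma)$ is $\ucs$-terminating'' to ``$s_1\sigma$ is $\ucs$-terminating'' is precisely the statement being proved, now for a mixed term---so invoking the IH there is circular. The problem recurs for $U$-introductions nested inside first arguments, where $\ell\sigma$ is not even a subterm of the projection of the whole term, so the bound $t\succeq\ell\pi(\sigma)$ is not available by the subterm argument at all. A correct version needs either a well-founded order on mixed terms (which is exactly what the detour through context-sensitive quasi-reductivity supplies) or an explicitly nested induction (outer on $\succ$ over $\terms$, inner on the $U$-nesting structure or on the derivation) whose statement speaks about mixed terms.

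Two further points need attention. First, your case analysis omits the final unraveled rule $U^\rho_n(t_n,\varsseq(\ell),\Evarsseq(t_1,\ldots,t_{n-1}))\to r$; this is the step where the projection changes from $\ell\sigma$ to $r\sigma$ and strictly decreases via a conditional $\RR$-step, so it must be grouped with your case~(i) in the counting argument, and it is also the place where all $n$ conditions have to be discharged at once. Second, the soundness lemma you defer to a ``side induction on derivation length'' is itself substantial: plain unravelings are unsound in general, and even for $\ucs$ the condition sub-derivations contain nested $U$-introductions and mixed substitutions (including extra variables of $t_i$ possibly matched against mixed terms), so this lemma cannot be separated cleanly from the termination claim and must be established in a simultaneous induction. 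The overall architecture of your proposal is salvageable, but these are the points where the argument as stated does not yet go through.
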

Thus the desired equivalence follows as an easy corollary.
\begin{corollary}
  Quasi-decreasingness of a DCTRS $\RR$ is equivalent to $\mu$-termination of
  the CSRS $\ucs$ on original terms.
\end{corollary}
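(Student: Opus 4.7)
The corollary is a biconditional, so the plan is to split it into its two directions and discharge each by directly invoking a theorem already at hand. No new machinery, constructions, or auxiliary lemmata are required: the proof is essentially a one-line combination.

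For the forward direction, namely that quasi-decreasingness of $\RR$ implies $\mu$-termination of $\ucs$ on original terms, I will simply cite the Schernhammer--Gramlich theorem stated immediately before the corollary. For the backward direction, from $\mu$-termination of $\ucs$ on original terms to quasi-decreasingness of $\RR$, I will invoke \thmref{main}, which has just been established.

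There is no genuine obstacle here; the corollary is a formal consequence of the conjunction of the two preceding theorems, and the entire technical content (the construction of the order $\succ$ as $({\xr[\ucs]{}}\cup{\subterm_\mu})^+$ restricted to $\terms$, and the verification of its four defining properties) has already been carried out in the proof of \thmref{main}. The proof I would write therefore amounts to a short paragraph observing that the two implications together yield the stated equivalence.
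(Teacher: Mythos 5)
Your proposal is correct and matches the paper exactly: the corollary is obtained by combining \thmref{main} (for the direction from $\mu$-termination of $\ucs$ on original terms to quasi-decreasingness) with the Schernhammer--Gramlich theorem stated just before it (for the converse). No further argument is needed.
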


\section{Experiments}
\label{sec:exps}

%\begin{figure}[t]
%  \centering
%\begin{tikzpicture}[rounded corners]
%  % text
%  \draw (-1.0,2.7) node {COPS};
%  \draw (-0.4,2.1) node {DCTRSs};
%  \draw (1.4,1.45) node {$U(\RR)$ + \TTTT};
%  \draw (2.1,0.85) node {$\ucs$ + \vmtl};
%  \draw (2.8,0.28) node {$\ucs$ + \muterm};
%  %\draw (8.6,0.28) node {$\ucs$ + \vmtl};
%  \draw (-0.2,1.5) node {YES:};
%  \draw (0.2,0.9) node {YES:};
%  \draw (0.6,0.3) node {YES:};
%  %\draw (6.3,0.3) node {NO:};
%  % numbers
%  \draw (5.9,2.7) node {149};
%  \draw (5.6,2.1) node {103};
%  \draw (5.3,1.5) node {78};
%  \draw (5.1,0.9) node {75};
%  \draw (4.9,0.3) node {65};
%  %\draw (10.7,0.3) node {11};
%  % rectangles
%  \draw (0,0) rectangle (5.2,0.6);
%  \draw (-0.4,-0.2) rectangle (5.4,1.2);
%  \draw (-0.8,-0.4) rectangle (5.6,1.8);
%  %\draw (5.8,0) rectangle (11,0.6);
%  \draw (-1.2,-0.6) rectangle (6.0,2.4);
%  \draw (-1.6,-0.8) rectangle (6.3,3.0);
%\end{tikzpicture}
%\caption{Number of quasi-decreasing DCTRSs out of a total of 103 in COPS by
%transformation and termination tool.}
%\label{fig:exp}
%\end{figure}

\begin{table}[t]
  \centering
  \caption{(Non-)quasi-decreasing DCTRSs out of 103 in Cops
  by transformation and tool.}
  \begin{tabularx}{\textwidth}{Xc@{\ }c@{\ }cc@{\ }c@{\ }cc@{\ }c@{\ }c@{\ }c@{\ }cc}
    \toprule
    & \multicolumn{3}{c}{conditional $\RR$}
    & \multicolumn{3}{c}{$\ucs$}
    & \multicolumn{5}{c}{$U(\RR)$} & \\
    \cmidrule(lr){2-4}
    \cmidrule(lr){5-7}
    \cmidrule(lr){8-12}
    & \scriptsize \aprove & \scriptsize \muterm & \scriptsize \vmtl
    & \scriptsize \aprove & \scriptsize \muterm & \scriptsize \vmtl
    & \scriptsize \aprove & \scriptsize \muterm & \scriptsize \natt
    & \scriptsize \TTTT & \scriptsize \vmtl & total \\
    \midrule
    YES & 80 & 78 & 80 & 78 & 78 & 79 & 81 & 78 & 77 & 78 & 78 & 84 \\
    NO & -- & 12 & -- & -- & -- & -- & -- & -- & -- & -- & -- & 12 \\
    \bottomrule
  \end{tabularx}
  \label{tab:exp}
\end{table}

In order to present up-to-date numbers for (non-)quasi-decreasingness we
conducted experiments on the 103 DCTRSs contained in the confluence problems
database using various automated termination tools.
%We conducted a number of experiments on the 103 DCTRSs from the confluence problems
%database using various termination tools.
Of these, \aprove~\cite{GBE14},
\muterm~5.13~\cite{AGLN10}, and \vmtl~1.3~\cite{SG09} are able to directly show
quasi-decreasingness and
\muterm is the only tool that can show non-quasi-decreasingness~\cite{LMG14}.
\aprove, \muterm, and \vmtl can also handle context-sensitive
systems and we used them in combination with $\ucs$.
%Finally, we also ran \aprove, as well as the standard termination tools
Finally, we also ran the previous tools together with
\natt~\cite{YKS14} and \TTTT~1.16~\cite{KSZM09} on
$U(\RR)$.
The results for a timeout of one minute are shown in \tabref{exp}.
There are several points of notice.
% I bet CS will remove the following sentence ;)
%First of all, there are no yes/no-conflicts, wich is good (if you are not into
%certification, that is).
%
The most yes-instances (81) we get if we use \aprove together with $U(\RR)$.
Interestingly, \aprove cannot show quasi-decreasingness of system 362 directly,
although it succeeds (like all other tools besides \natt) if provided with its
unraveling.
Moreover, systems 266, 278, and 279 can be shown to be quasi-decreasing by \aprove if
we use $U(\RR)$ but not if we use $\ucs$ (even if we increase the timeout to 5
minutes).
On system 363 only \muterm succeeds (in the direct approach).
If we compare \muterm on conditional systems to \muterm with $\ucs$,
the direct method succeeds on system 360 but not
on system 329. Conversely, when using $\ucs$ it succeeds on system 329
but not on system 360.
Moreover, \muterm seems to have some problems with systems 278 and 342,
generating errors in the direct approach.
With $\ucs$ \vmtl succeeds on 79 systems, subsuming the results from \aprove and
\muterm (78 each). On system 357 only \vmtl together with $\ucs$ succeeds.
With $U(\RR)$, \natt succeeds on 77 systems, this is subsumed by
\TTTT, succeeding on 78 systems, which in turn is subsumed by \aprove,
succeeding, as mentioned above, on 81 systems.
In total 84 systems are shown to be quasi-decreasing, 12 systems to be
non-quasi-decreasing, and only 7 remain open.
%
%We conducted a number of experiments on 103 DCTRSs from the confluence problems
%database using the termination tools \TTTT~1.16~\cite{KSZM09},
%\vmtl~1.3~\cite{SG09}, and \muterm~4.3~\cite{AGIL07} in combination with the
%unravelings $U(\RR)$ and $\ucs$.
%Of these \vmtl is the only one which can show $\mu$-termination on original
%terms. % and hence may also be used to show non-quasi-decreasingness.
%The tool \muterm works on context-sensitive systems but can only show
%$\mu$-termination with respect to the whole signature.
%Finally, \TTTT works on TRSs and hence we employ the unraveling $U(\RR)$ without
%the replacement map.
%The timeout was set to one minute.
%The results are shown in \figref{exp}.
%We get 65 yes-instances from \muterm, these are subsumed by the 75 yes-instances
%by \vmtl, which are in turn subsumed by the 78 yes-instances of \TTTT.
%In total 78 systems are shown to be quasi-decreasing and 25 remain open.
%
One of these, for example, is system 337 from Cops, for computing
Bubble-sort~\cite{SR06}
\begin{xalignat*}{2}
  x < 0 &\to \fs{false} &
  0 < \fs{s}(y) &\to \fs{true}\\
  \fs{s}(x) < \fs{s}(y) &\to x < y &
  x : y : \ys &\to y : x : \ys \IF x < y \EQ \fs{true}
\intertext{whose unraveling replaces the last (and only conditional) rule by the
two rules:}
  x : y : \ys &\to \fs{U}(x < y, x, y, \ys) &
  \fs{U}(\fs{true}, x, y, \ys) &\to y : x : \ys
\end{xalignat*}

\section{Conclusion}

%Although we have not improved upon the previous characterization of
%quasi-decreasingness,
We provide a direct proof
for one direction of a previous characterization of quasi-decreasingness, i.e.,
that $\mu$-termination of a CSRS
$\ucs$ on original terms implies quasi-decreasingness of the DCTRS $\RR$ without
the need of a detour by using the notion of context-sensitive quasi-reductivity.
We believe that our proof could easily be adapted to any other context-sensitive
transformation as long as it is simulation complete.
Moreover, we provide experimental results on a recent collection of DCTRSs.
Knowing that a DCTRS is quasi-decreasing is, among other things, useful to
show confluence with the Knuth-Bendix criterion for CTRSs~\cite{AL94}.

\subparagraph*{Acknowledgments.}

We thank the Austrian Science Fund (FWF project P27502) for supporting
our work.
Moreover we would like to thank the anonymous reviewers for useful hints and
remarks and particularly for pointing out a flaw in an earlier version of
Section~\ref{sec:exps}.

%%
%% Bibliography
%%

%% Either use bibtex (recommended), 

\bibliography{short,references}

\begin{thebibliography}{10}

\bibitem{AGLN10}
B.~Alarcón, R.~Gutiérrez, S.~Lucas, and R.~Navarro-Marset.
\newblock {Proving Termination Properties with {MU-TERM}}.
\newblock In {\em Proc.\ 13th AMAST}, volume 6486 of {\em LNCS}, pages
  201--208. Springer Berlin Heidelberg, 2010.
\newblock \href {http://dx.doi.org/10.1007/978-3-642-17796-5_12}
  {\path{doi:10.1007/978-3-642-17796-5_12}}.

\bibitem{AL94}
J.~Avenhaus and C.~Lor{\'{i}}a-Sáenz.
\newblock On conditional rewrite systems with extra variables and deterministic
  logic programs.
\newblock In {\em Proc.\ 5th LPAR}, volume 822 of {\em LNCS}, pages 215--229.
  Springer, 1994.
\newblock \href {http://dx.doi.org/10.1007/3-540-58216-9_40}
  {\path{doi:10.1007/3-540-58216-9_40}}.

\bibitem{BN98}
F.~Baader and T.~Nipkow.
\newblock {\em Term Rewriting and All That}.
\newblock Cambridge University Press, 1998.

\bibitem{GBE14}
J.~Giesl, M.~Brockschmidt, F.~Emmes, F.~Frohn, C.~Fuhs, C.~Otto, M.~Plücker,
  P.~Schneider-Kamp, T.~Ströder, S.~Swiderski, and R.~Thiemann.
\newblock {Proving Termination of Programs Automatically with {AProVE}}.
\newblock In {\em Proc.\ 7th IJCAR}, volume 8562 of {\em LNCS}, pages 184--191.
  Springer International Publishing, 2014.
\newblock \href {http://dx.doi.org/10.1007/978-3-319-08587-6_13}
  {\path{doi:10.1007/978-3-319-08587-6_13}}.

\bibitem{KSZM09}
M.~Korp, C.~Sternagel, H.~Zankl, and A.~Middeldorp.
\newblock Tyrolean termination tool 2.
\newblock In {\em Proc.\ 20th RTA}, volume 5595 of {\em LNCS}, pages 295--304.
  Springer, 2009.
\newblock \href {http://dx.doi.org/10.1007/978-3-642-02348-4_21}
  {\path{doi:10.1007/978-3-642-02348-4_21}}.

\bibitem{L98}
S.~Lucas.
\newblock Context-sensitive computations in functional and functional logic
  programs.
\newblock {\em J. Funct. Logic Progr.}, 1998(1), 1998.

\bibitem{LMG14}
S.~Lucas, J.~Meseguer, and R.~Gutiérrez.
\newblock {Extending the 2D Dependency Pair Framework for Conditional Term
  Rewriting Systems}.
\newblock In {\em Proc.\ 24th LOPSTR}, volume 8981 of {\em LNCS}, pages
  113--130. Springer International Publishing, 2014.
\newblock \href {http://dx.doi.org/10.1007/978-3-319-17822-6_7}
  {\path{doi:10.1007/978-3-319-17822-6_7}}.

\bibitem{O02}
E.~Ohlebusch.
\newblock {\em Advanced Topics in Term Rewriting}.
\newblock Springer, 2002.

\bibitem{SG09}
F.~Schernhammer and B.~Gramlich.
\newblock {VMTL} - a modular termination laboratory.
\newblock In {\em Proc.\ 20th RTA}, volume 5595 of {\em LNCS}, pages 285--294.
  Springer, 2009.
\newblock \href {http://dx.doi.org/10.1007/978-3-642-02348-4_20}
  {\path{doi:10.1007/978-3-642-02348-4_20}}.

\bibitem{SG10}
F.~Schernhammer and B.~Gramlich.
\newblock Characterizing and proving operational termination of deterministic
  conditional term rewriting systems.
\newblock {\em J. Logic Algebr. Progr.}, 79(7):659--688, 2010.
\newblock \href {http://dx.doi.org/10.1016/j.jlap.2009.08.001}
  {\path{doi:10.1016/j.jlap.2009.08.001}}.

\bibitem{YKS14}
A.~Yamada, K.~Kusakari, and T.~Sakabe.
\newblock {Nagoya Termination Tool}.
\newblock In {\em Proc.\ Joint 25th RTA and 12th TLCA}, volume 8560 of {\em
  LNCS}, pages 466--475. Springer-Verlag, 2014.
\newblock \href {http://dx.doi.org/10.1007/978-3-319-08918-8_32}
  {\path{doi:10.1007/978-3-319-08918-8_32}}.

\bibitem{SR06}
T.~F. Şerbănuţă and G.~Roşu.
\newblock {Computationally equivalent elimination of conditions}.
\newblock In {\em Proc.\ 6th RTA}, volume 4098 of {\em LNCS}, pages 19--34.
  Springer-Verlag, 2006.
\newblock \href {http://dx.doi.org/10.1007/11805618_3}
  {\path{doi:10.1007/11805618_3}}.

\end{thebibliography}

%% .. or use the thebibliography environment explicitely

\end{document}